%%%%%%%%%%%%%%%%%%%%%%%%%%%%%%%%%%%%%%%%%%%%%%%%%%%%%%%%%%%%%%%%%%%%%%%%%%%%%%%%%%%%%%%%%%%%%%%%%
% IISER Thiruvananthapuram Thesis Report Format
% Main file adapted for: "Qudits offer no advantages over dits for sending random messages"
%%%%%%%%%%%%%%%%%%%%%%%%%%%%%%%%%%%%%%%%%%%%%%%%%%%%%%%%%%%%%%%%%%%%%%%%%%%%%%%%%%%%%%%%%%%%%%%%%

\documentclass[12pt,a4wide]{article}

\oddsidemargin 0.5cm \evensidemargin 0.5cm
\marginparwidth 40pt \marginparsep 10pt
\topmargin 0pt \headsep 40pt
\textheight 635pt \textwidth 450pt

\usepackage{amsthm,amssymb,mathrsfs,setspace,booktabs}
\usepackage{mathtools,amsmath,nccmath}

\usepackage{pstricks}
\usepackage{array}
\newcolumntype{P}[1]{>{\centering\arraybackslash}p{#1}}

\usepackage{tikz}

\usepackage{subcaption}

\usepackage{hyperref}

\usepackage{appendix}
\usepackage{caption}
\usepackage[]{biblatex} % we will use biblatex + refs.bib
\usepackage{fancyhdr}
\usepackage{multicol}
\usepackage{geometry}
\usepackage{xcolor}
\hypersetup{
    colorlinks,
    linkcolor={blue!50!black},
    citecolor={blue!50!black},
    urlcolor={blue!80!black}
}

\usepackage{graphicx}
\usepackage[nottoc]{tocbibind}

% --- Extra packages you used in the article ---
\usepackage{braket}
\usepackage[nameinlink,capitalize]{cleveref}
\usepackage{bbm}

%----------------------------------------------------------------------------------------%

\setlength{\parskip}{1em plus 0.25em minus 0.25em}

\theoremstyle{plain}
\newtheorem{theorem}{Theorem}[section]

\newtheorem{proposition}[theorem]{Proposition}

\theoremstyle{definition}

\theoremstyle{remark}
\newtheorem{remark}[theorem]{Remark}

\renewcommand{\today}{\ifcase \month \or January\or February\or March\or April\or May%
\or June\or July\or August\or September\or October\or November\or December\fi\ %
\number \year} 

%%%%%%%%%%%%%%%%%%%%%%%%%%%%%%%%%%%%%%%%%%%%%%%%%%%%%%%%%%%%%%%%%%%%%
%               Custom commands you used in the paper              %
%%%%%%%%%%%%%%%%%%%%%%%%%%%%%%%%%%%%%%%%%%%%%%%%%%%%%%%%%%%%%%%%%%%%%

\newcommand{\Tr}{\operatorname{Tr}}

% IMPORTANT: point biblatex to the correct .bib file
\addbibresource{refs.bib}

\begin{document}

%%%%%%%%%%%%%%%%%%%%%%%%%%%%%%%%%%%%%%%%%%%%%%%%%%%%%%%%%%%%%%%%%%%%%%%%%%%%%%%%%%%%%%%%%%%%%%%%
% Intro pages (title, certificate, acknowledgements, abstract, etc.)
%%%%%%%%%%%%%%%%%%%%%%%%%%%%%%%%%%%%%%%%%%%%%%%%%%%%%%%%%%%%%%%%%%%%%%%%%%%%%%%%%%%%%%%%%%%%%%%%

%----------------------------------------------------------------------------------------
%	TITLE PAGE
%----------------------------------------------------------------------------------------
\begin{titlepage}
\enlargethispage{3cm}

\begin{center}

\vspace*{-1cm}

\textbf{\Large Qudits offer no advantages over dits for sending random messages}\\[10pt]

\vspace*{0.5cm}

% ALTERNATIVE COVER PAGES:
% Uncomment lines 27-33 and hide lines 19-25 for *MINOR PROJECT COVER PAGE*
% Uncomment lines 35-41 and hide lines 19-25 for *PhD THESIS COVER PAGE*

A Thesis Submitted \\
to the University of Texas at Austin  \\
for the Turing Scholars program \\

% A Minor Project Report Submitted in Partial Fulfillment of the Requirements for \\
%% \vspace{0.5cm}
% {\Large \bf MINOR DEGREE}\\
% \vspace{0.3cm}
% in\\ 
% \vspace{0.3cm}
% {\large \bf [Department Name] } \\

% A thesis submitted for the degree of\\
%% \vspace{0.5cm}
% {\Large \bf DOCTOR OF PHILOSOPHY}\\
% \vspace{0.3cm}
% in\\ 
% \vspace{0.3cm}
% {\large \bf [Department Name] } \\

                      \vspace{10mm}
                   {\em  by} \\ \vspace{3mm}
             {\large \bf Ronit Amit Shah}\\[.3in]

\begin{figure}[h]
  \begin{center}
\includegraphics[height=50mm]{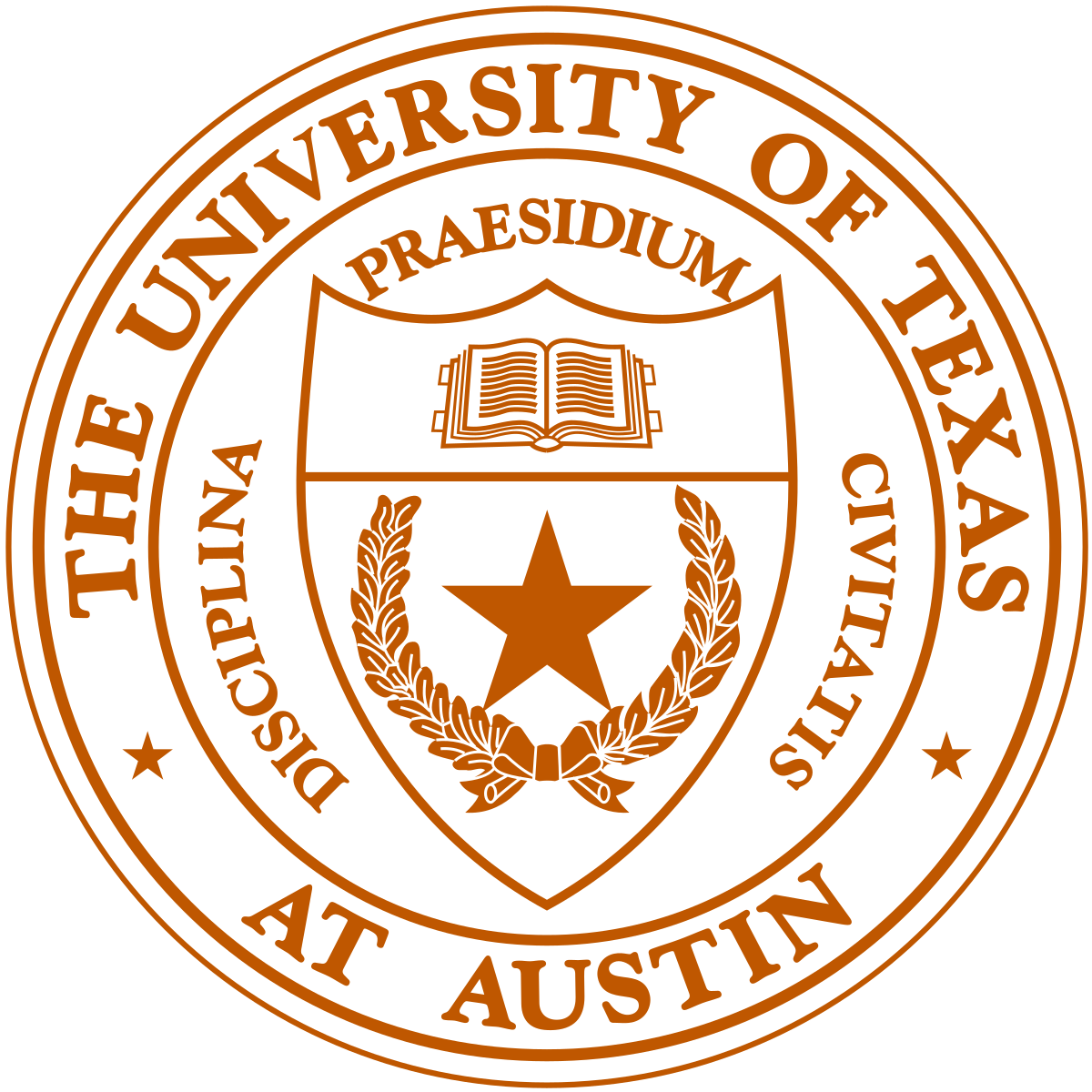}
  \end{center}
\end{figure}
\vspace*{0.2cm}

{\em\large to }\\%[8pt]
{\bf\large Department of Computer Science} \\%[4pt] Uncomment if not applicable (Humanities/Data Science)
{\bf\large College of Natural Sciences}\\%[4pt]
{\bf\large University of Texas at Austin}\\%[8pt]
{\it\large December 2025}

\end{center}

\end{titlepage}

\clearpage

% If you have a certificate page, include it here
% \input{00_Intro_Pages/03b_Certificate}

%----------------------------------------------------------------------------------------
%	ACKNOWLEDGMENTS
%----------------------------------------------------------------------------------------
\begin{center}
{\large{\bf{ACKNOWLEDGEMENT}}}
\end{center}

I would first like to thank my supervisor, Scott Aaronson, for his constant support and guidance throughout this project. His patience in teaching me quantum mechanics, his intuition in pointing me toward the relevant literature and authors, and his careful feedback on multiple drafts have all been invaluable in shaping this work.

I am also sincerely grateful to my committee members, Nick Hunter-Jones and William Kretschmer, whose flexibility and responsiveness made it possible to carry out this project smoothly. Their time, comments, and encouragement played an important role in bringing this thesis to completion.

Finally, I want to express my deepest appreciation to my friends and family for their constant encouragement and support at every step. Their belief in me and their care outside of academics have been essential to seeing this work through.

\clearpage
%----------------------------------------------------------------------------------------
%	ABSTRACT
%----------------------------------------------------------------------------------------
\vspace{6pt}
\begin{flushleft}
    \setlength{\parskip}{0pt}
        %\bigskip
    {\centering{{\Large{\bf{ABSTRACT}}}} \par}
    \bigskip
	\hrule \vspace{1.5cm} % Horizontal line	
\end{flushleft} % This section is not essential for the abstract

We consider the following simple scenario: Alice has one of many possible messages, drawn from a known distribution, and wants to maximize the probability that Bob guesses her message correctly.  We prove that if Alice can send only a qudit to Bob, without preshared entanglement, there is never any advantage over sending him a classical dit. This result was previously known only for a uniform distribution.

We also prove a mixed-state generalization of this result in the form of an upper bound on the success probability of discriminating between mixed quantum states with a single measurement. This bound is based solely on the dimension, probability distribution, and eigenvalues of the states and is sharp among such bounds.  
\vspace{3cm} % Reduce if text overflowing to a new page. Don't make it too long.

\clearpage

\pagenumbering{arabic}
\setcounter{page}{1}

% ========================== Main content starts here ========================== %

\section{Introduction}

\subsection*{Classical messenger--receiver task}
Consider the following classical scenario. Alice wishes to send Bob one of $m$ messages $i\in\{1,\dots,m\}$ drawn from a known prior distribution $(p_i)$, and Bob must output a guess $\hat{\imath}$ for which message was sent. Alice is allowed to send only a single classical symbol from an alphabet of size $d$ (a \emph{dit}), and Bob can use any decoding rule from symbols to guesses. The goal is to maximize the success probability
\[
P_{\mathrm{succ}}^{\mathrm{cl}}=\Pr[\hat{\imath}=i].
\]

No matter how Alice and Bob choose their encoding and decoding, this classical success probability is upper-bounded by the sum of the $d$ largest priors:
\[
P_{\mathrm{succ}}^{\mathrm{cl}}\ \le\ \sum_{k=1}^{d} p_{(k)},
\]
where $p_{(1)}\ge p_{(2)}\ge\cdots\ge p_{(m)}$ denotes the priors ordered from largest to smallest. Intuitively, each of the $d$ possible signals can reliably single out at most one message; the best one can do is dedicate the signals to the $d$ most likely messages. This can be formalized using rearrangement/majorization arguments and the comparison of statistical experiments~\cite[Ch.~10]{HardyLittlewoodPolya,Blackwell1953,Torgersen1991}. The bound is tight: by assigning distinct signals to the $d$ most probable messages and ignoring the rest, one achieves $P_{\mathrm{succ}}^{\mathrm{cl}}=\sum_{k=1}^{d} p_{(k)}$.

\subsection*{Quantum version: sending one qudit}
Now consider a quantum version of the same task. Alice again has messages $i\in\{1,\dots,m\}$ drawn from the same priors $(p_i)$, but now she is allowed to send a single \emph{qudit} to Bob, i.e., a quantum state $\rho_i$ of some fixed dimension $d$. Bob is free to attach any local ancilla, apply any quantum circuit, and perform any projective measurement in a large space, followed by a classical decision rule. In particular, Bob’s measurement can have \emph{more} than $d$ possible outcomes because of the ancilla---so it is not a priori obvious that the classical $d$-signal ceiling still applies in this setting.

The overall success probability is
\[
P_{\mathrm{succ}}^{\mathrm{qu}}=\Pr[\hat{\imath}=i],
\]
where $\hat{\imath}$ is Bob’s guess. This is exactly the minimum-error quantum state discrimination problem for the ensemble $\{(p_i,\rho_i)\}$~\cite{Helstrom1976,BarnettCroke2009,Watrous2018}. A priori Alice might choose mixed states $\rho_i$, but the success probability is affine (linear) in each $\rho_i$ for fixed measurement, and the set of density operators is convex. Therefore, for fixed priors and dimension, the optimal success probability is achieved by an ensemble of \emph{pure} states. Mixed states are never needed to improve the optimum for this messenger--receiver task.

We can also harmlessly compress the Hilbert space: if the states $\{\rho_i\}$ jointly lie in a subspace of dimension $d'<d$, we can project onto that subspace and treat its dimension as the effective $d$. Throughout, when we say ``dimension $d$ of the quantum states'' we mean the dimension of this joint support.

By contrast, in the presence of preshared entanglement between Alice and Bob, protocols such as superdense coding show that a single transmitted qubit combined with one entangled pair can convey the information of two classical bits. Our result complements this: in the \emph{absence} of preshared entanglement, a single qudit has no one-shot advantage over a classical $d$-ary symbol for identifying a randomly chosen message.

\subsection*{Main results}
Our first main result shows that, even with arbitrary non-uniform priors and arbitrary pure-state encodings on a $d$-dimensional system, a qudit without preshared entanglement cannot outperform a classical $d$-ary signal for one-shot message identification.

\begin{theorem}[Qudits vs.\ dits for pure encodings]\label{thm:pure-main}
Let Alice’s messages have priors $(p_i)_{i=1}^m$, and let these priors be sorted as $p_{(1)}\ge\cdots\ge p_{(m)}$. Suppose that for each $i$ Alice encodes message $i$ into a pure state $\ket{\psi_i}$, and that the span of the states $\{\ket{\psi_i}\}$ has dimension $d$. Then for any one-shot measurement and decision rule Bob may apply,
\[
P_{\mathrm{succ}}^{\mathrm{qu}}\ \le\ \sum_{k=1}^{d}p_{(k)}.
\]
This bound is tight: if the $d$ most likely states are mutually orthogonal, Bob can measure in a basis extending them and achieve success probability $\sum_{k=1}^{d}p_{(k)}$.
\end{theorem}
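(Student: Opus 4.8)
The idea is to reduce Bob's most general strategy to a POVM on the $d$-dimensional span $\mathcal H$ of the $\ket{\psi_i}$, rewrite $P_{\mathrm{succ}}^{\mathrm{qu}}$ as a linear functional of the overlaps $q_i:=\bra{\psi_i}M_i\ket{\psi_i}$, and observe that these $q_i$ obey exactly the two constraints $0\le q_i\le 1$ and $\sum_i q_i\le d$ that already pin down the classical ceiling. First I would invoke Naimark's theorem: any protocol in which Bob appends an ancilla, applies a unitary, performs a projective measurement, and then maps each outcome to a guess is equivalent, on $\mathcal H$, to a POVM $\{M_a\}$ with $M_a\succeq 0$ and $\sum_a M_a=\Id_{\mathcal H}$; merging the outcomes that are decoded to the same message yields a POVM $\{M_i\}_{i=1}^m$ with one effect per candidate message, and then
\[
P_{\mathrm{succ}}^{\mathrm{qu}}=\sum_{i=1}^m p_i\,\bra{\psi_i}M_i\ket{\psi_i}.
\]

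Next I would set $q_i:=\bra{\psi_i}M_i\ket{\psi_i}$ and record two facts. Since $0\preceq M_i\preceq\Id_{\mathcal H}$ and $\braket{\psi_i|\psi_i}=1$, we have $0\le q_i\le 1$. Since $q_i\le\lambda_{\max}(M_i)\le\Tr M_i$ and $\sum_i\Tr M_i=\Tr\Id_{\mathcal H}=d$, we have $\sum_{i=1}^m q_i\le d$. This is the crux: although the ancilla lets Bob's measurement have many more than $d$ outcomes, additivity of the trace caps the total $\sum_i q_i$ at $d$, so the extra outcomes are worthless and the quantum freedom collapses to the same budget as $d$ classical signals.

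It then remains to maximize $\sum_i p_i q_i$ over the polytope $\{q\in\mathbb R^m:\ 0\le q_i\le 1,\ \sum_i q_i\le d\}$, a fractional-knapsack linear program. A one-line exchange argument (if $q_i<1$ for some $i$ among the $d$ largest priors while $q_j>0$ for some $j$ outside that set, shift mass from $j$ to $i$ without decreasing the objective) shows the optimum is attained by setting $q_i=1$ on the $d$ most likely messages and $0$ elsewhere, with value $\sum_{k=1}^{\min(d,m)}p_{(k)}$; equivalently, assuming $d\le m$, $\sum_i p_iq_i=\sum_i\big(p_i-p_{(d)}\big)q_i+p_{(d)}\sum_i q_i\le\sum_i\big(p_i-p_{(d)}\big)^{+}+p_{(d)}\,d=\sum_{k=1}^d p_{(k)}$ (and the case $d\ge m$ is the trivial bound $\le 1$). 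This establishes the inequality. For tightness, if the $d$ most probable states are orthonormal I would extend them to an orthonormal basis of $\mathcal H$, let Bob measure in that basis, and have him output the message carried by the observed basis vector (an arbitrary guess for the leftover basis vectors): this succeeds with probability exactly $\sum_{k=1}^d p_{(k)}$.

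I do not expect a genuine obstacle here. The two places needing care are the Naimark-plus-coarse-graining reduction, so that the effects really sum to $\Id$ on the $d$-dimensional span rather than on some larger space, and the clean statement of the knapsack optimum when priors are tied. The single substantive step — the one I would flag as the crux — is the trace identity $\sum_i\Tr M_i=d$, which is exactly what refutes the intuition that giving Bob more measurement outcomes should help.
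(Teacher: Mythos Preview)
Your proof is correct and essentially the same as the paper's: your $q_i=\bra{\psi_i}M_i\ket{\psi_i}$ are exactly the paper's $S_i$, the two constraints $0\le q_i\le 1$ and $\sum_i q_i\le d$ are derived the same way (unit norm of $\ket{\psi_i}$, trace of the identity on $\mathcal H$), and the fractional-knapsack LP is identical. The only cosmetic difference is that the paper works directly in the Naimark dilation with an explicit isometry $V$ and avoids POVM language, whereas you invoke Naimark to compress to a POVM $\{M_i\}$ on $\mathcal H$; the resulting arguments coincide term by term.
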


Because the optimal communicator never needs mixed states, \cref{thm:pure-main} already suffices to answer the original messenger--receiver question: a single qudit without preshared entanglement is no more powerful than a classical dit for one-shot message identification.

Our second main result is a mixed-state generalization stated in terms of the eigenvalues of the weighted states $p_i\rho_i$.

\begin{theorem}[Mixed-state spectral bound]\label{thm:mixed-main}
Let Alice’s messages have priors $(p_i)$ and be encoded into mixed states $\rho_i$ of dimension $d$. Let $\lambda_{ik}$ denote the eigenvalues of $\rho_i$, and define $\lambda'_{ik}=p_i\lambda_{ik}$. Form the multiset $\{\lambda'_{ik}\}_{i,k}$ and sort it as
\[
\lambda'_{(1)}\ \ge\ \lambda'_{(2)}\ \ge\ \cdots.
\]
Then for any one-shot measurement and decision rule,
\[
P_{\mathrm{succ}}^{\mathrm{qu}}\ \le\ \sum_{k=1}^{d}\lambda'_{(k)}.
\]
This bound is tight given only the multiset of weighted eigenvalues and the dimension $d$: there exist ensembles and measurements that achieve equality.
\end{theorem}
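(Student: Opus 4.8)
The plan is to pass to the POVM picture and reduce the bound to a single linear program. As noted before the theorem, Bob's most general strategy --- ancilla, unitary, projective measurement, classical post-processing --- is by Naimark dilation and the harmless compression to the joint support equivalent to applying a POVM $\{E_i\}_{i=1}^m$ on the $d$-dimensional joint support, with $E_i\succeq 0$ and $\sum_i E_i=\Id$, and after merging outcomes we may assume one POVM element per candidate message, so that $P_{\mathrm{succ}}^{\mathrm{qu}}=\sum_i \Tr\big(E_i\,p_i\rho_i\big)$. First I would fix, for each $i$, a full orthonormal eigenbasis $\{\ket{\phi_{ik}}\}_{k=1}^{d}$ of $\rho_i$ (padding with zero eigenvalues if $\rho_i$ is not full rank), so that $p_i\rho_i=\sum_k \lambda'_{ik}\ket{\phi_{ik}}\bra{\phi_{ik}}$. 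Expanding the trace then gives
\[
P_{\mathrm{succ}}^{\mathrm{qu}}=\sum_{i,k}\lambda'_{ik}\,q_{ik},\qquad q_{ik}:=\bra{\phi_{ik}}E_i\ket{\phi_{ik}}.
\]

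The heart of the argument is two elementary facts about the numbers $q_{ik}$. Since $0\preceq E_i\preceq \sum_j E_j=\Id$, each $q_{ik}\in[0,1]$; and since $\{\ket{\phi_{ik}}\}_{k=1}^{d}$ is, for each $i$, an orthonormal basis of the whole support,
\[
\sum_{i,k}q_{ik}=\sum_i \Tr\Big(E_i\sum_k\ket{\phi_{ik}}\bra{\phi_{ik}}\Big)=\sum_i\Tr(E_i)=\Tr(\Id)=d.
\]
Hence $P_{\mathrm{succ}}^{\mathrm{qu}}$ is at most the maximum of $\sum_{i,k}\lambda'_{ik}q_{ik}$ over all $q_{ik}\in[0,1]$ with $\sum_{i,k}q_{ik}\le d$. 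This is a fractional knapsack problem with nonnegative values $\lambda'_{ik}$ and unit sizes, whose optimum puts weight $1$ on the $d$ largest weighted eigenvalues and weight $0$ elsewhere, giving exactly $\sum_{k=1}^{d}\lambda'_{(k)}$. (The dimension-$d$ hypothesis forces the number of positive $\lambda'_{ik}$ to be at least $d$, so this sum genuinely has $d$ terms.) Specialized to pure-state ensembles, where every $\lambda'_{ik}$ equals $p_i$ or $0$, this recovers \cref{thm:pure-main}.

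For tightness I would exhibit an ensemble whose states are all diagonal in one common basis $\{\ket 1,\dots,\ket d\}$. Such an ensemble commutes, so minimum-error discrimination is a purely classical problem solved optimally by measuring in that basis, and $P_{\mathrm{succ}}^{\mathrm{qu}}=\sum_{k=1}^{d}\max_i\lambda'_{ik}$. Given any sorted multiset $\mu_1\ge\cdots\ge\mu_N$ of weighted eigenvalues that can arise (so $\sum_j\mu_j=1$, and $\mu_1,\dots,\mu_d>0$ whenever the dimension is to be exactly $d$), place $\mu_1,\dots,\mu_d$ on the diagonal of a single full-rank state with prior $\mu_1+\cdots+\mu_d$, and put each remaining $\mu_j$ ($j>d$) into its own rank-one state $\ket1\bra1$ with prior $\mu_j$. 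Then the column maxima are exactly $\mu_1,\dots,\mu_d$, the supports span $\mathbb{C}^d$, and $P_{\mathrm{succ}}^{\mathrm{qu}}=\sum_{k=1}^{d}\mu_k$ matches the bound, so no sharper bound can depend only on $\{\lambda'_{ik}\}$ and $d$.

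I do not expect a serious obstacle here: the delicate points are only (i) the reduction that Bob's full apparatus is no stronger than a POVM on the $d$-dimensional joint support, and (ii) checking that the tightness construction is a legitimate ensemble of dimension exactly $d$, which is precisely where the observation $\mu_1,\dots,\mu_d>0$ is used. The one genuinely new idea compared with the pure-state case is to decompose each \emph{weighted} state $p_i\rho_i$ in its own eigenbasis before counting; after that, everything is linear-programming bookkeeping.
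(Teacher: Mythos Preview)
Your proof is correct and lands on the same linear program as the paper, but the route is genuinely different. The paper stays in its isometry model: it factors $\rho_i=\psi_i\psi_i^\dagger$ with $\psi_i^\dagger\psi_i$ diagonal, then for each outcome row $\bra{j}V$ projects onto the one-dimensional spans of the columns $\psi_i\ket{k}$, defines $S_{ik}$ as the total squared length of those projections over $j\in g^{-1}(i)$, and uses orthogonality of $\{\psi_i\ket{k}\}_k$ together with $\Tr(V^\dagger V)=d$ to get $\sum_{i,k}S_{ik}\le d$. Your argument instead works directly with the POVM $\{E_i\}$ and the diagonal entries $q_{ik}=\bra{\phi_{ik}}E_i\ket{\phi_{ik}}$; the padding of each eigenbasis to a full orthonormal basis of $\mathbb{C}^d$ is what makes $\sum_k q_{ik}=\Tr(E_i)$ and hence $\sum_{i,k}q_{ik}=d$ drop out in one line, with no projection bookkeeping. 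What your approach buys is brevity and transparency; what the paper's approach buys is that it never invokes Naimark/POVM language, in keeping with its stated ``first-principles'' measurement model. Your tightness construction (a commuting ensemble diagonal in a single basis, with one full-rank state carrying the top $d$ weighted eigenvalues) is also a bit more explicit than the paper's, which assigns orthogonal eigenvectors to the top $d$ weighted eigenvalues without fully spelling out how the remaining spectral pieces sit; both achieve $\sum_{k=1}^d\lambda'_{(k)}$.
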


While \cref{thm:mixed-main} is not needed to establish the qudit--vs.--dit statement (because optimal encodings can be taken pure), it is useful in its own right. Since our measurement model is equivalent in power to arbitrary POVMs and post-processing~\cite{Helstrom1976,Watrous2018}, \cref{thm:mixed-main} gives a general spectral upper bound on the optimal success probability of minimum-error discrimination for any finite ensemble of mixed states, in terms only of their dimension, priors, and weighted eigenvalues.

Both bounds strictly strengthen the standard dimension-only ceiling $P_{\mathrm{succ}}^{\mathrm{qu}}\le d\,p_{\max}$ and mirror the classical messenger--receiver limit based on $d$ possible signals.

\subsection*{Illustrative example: one bit/qubit and three messages}
As a concrete illustration, suppose Alice wishes to send one of three messages to Bob with priors
\[
(p_1,p_2,p_3)=\Big(\tfrac{1}{2},\tfrac{1}{3},\tfrac{1}{6}\Big).
\]

\paragraph{Classical case.}
If Alice is allowed to send only a single classical bit (two possible signals, so $d=2$), the classical bound yields
\[
P_{\mathrm{succ}}^{\mathrm{cl}}\ \le\ \sum_{k=1}^{2}p_{(k)}\ =\ \tfrac{1}{2}+\tfrac{1}{3}\ =\ \tfrac{5}{6},
\]
and this value is achievable by dedicating the two bit values to the two most likely messages and never explicitly encoding the third.

\paragraph{Quantum case.}
Now let Alice be allowed to send a single qubit (again $d=2$). By \cref{thm:pure-main} (or \cref{thm:mixed-main} together with the fact that pure states suffice),
\[
P_{\mathrm{succ}}^{\mathrm{qu}}\ \le\ \sum_{k=1}^{2}p_{(k)}\ =\ \tfrac{1}{2}+\tfrac{1}{3}\ =\ \tfrac{5}{6}.
\]
This upper bound is tight: Alice can encode the two most likely messages as orthogonal qubit states and sacrifice the least likely one; Bob measures in the corresponding basis and decodes accordingly. Thus for this example the optimal quantum and classical success probabilities \emph{coincide}, and our general theorems show this remains true for any priors and any dimension $d$.

\paragraph{Structure.}
In \cref{sec:related} we briefly review related classical and quantum results and compare our bounds to existing ones, including for the preceding example. \Cref{sec:measurement} introduces the measurement model we use. \Cref{sec:pure-proof} proves \cref{thm:pure-main} using that model and explains why the bound is sharp. \Cref{sec:mixed} proves the mixed-state spectral bound and its sharpness. We conclude in \cref{sec:conclusion}.
\section{Related work}\label{sec:related}
The classical messenger--receiver task, and more generally the comparison of statistical experiments, is treated in detail in the work of Blackwell and others~\cite{Blackwell1953,Torgersen1991}. The fact that with only $d$ possible signals the optimal Bayes success probability is at most $\sum_{k=1}^{d}p_{(k)}$ follows from standard majorization inequalities~\cite[Ch.~10]{HardyLittlewoodPolya} and can be phrased in Blackwell’s order as saying that an experiment which perfectly reveals one of the top-$d$ labels is maximal for $0$--$1$ loss.

On the quantum side, minimum-error state discrimination has been studied since the early days of quantum detection theory~\cite{Helstrom1976,YuenKennedyLax1975}. For $m=2$ states, the exact optimum is given by Helstrom’s trace-norm formula~\cite{Helstrom1976}. For larger $m$, no closed form is known in general, but many useful bounds and constructions exist; see \cite{BarnettCroke2009,Watrous2018} for overviews.

There are two broad families of one-shot upper bounds on $P_{\mathrm{succ}}^{\mathrm{qu}}$:

\paragraph{Dimension-only quantum ceilings.}
A standard support-dimension argument (or a simple feasible dual point) yields
\[
P_{\mathrm{succ}}^{\mathrm{qu}}\ \le\ d\,\max_i p_i\|\rho_i\|_\infty,
\]
which for pure states simplifies to $P_{\mathrm{succ}}^{\mathrm{qu}}\le d\,p_{\max}$ and, under equal priors, to $P_{\mathrm{succ}}^{\mathrm{qu}}\le d/m$; see, e.g., \cite[Ch.~3]{Watrous2018} and \cite{BarnettCroke2009}. This coarse ceiling is tight on many symmetric/equiprobable ensembles (trine, tetrahedral, SICs).

For the illustrative example in the introduction with priors $(1/2,1/3,1/6)$ and $d=2$, this bound gives
\[
P_{\mathrm{succ}}^{\mathrm{qu}}\ \le\ d\,p_{\max} = 2\cdot\tfrac{1}{2} = 1,
\]
which is vacuous. In contrast, our bound \cref{thm:pure-main} yields the exact optimum $P_{\mathrm{succ}}^{\mathrm{qu}}=5/6$ using only the priors and the dimension. More generally, the bounds we prove in \cref{thm:pure-main,thm:mixed-main} use only the same coarse information (dimension plus priors/eigenvalues) but strictly strengthen this ceiling: we replace $d\,p_{\max}$ by $\sum_{k=1}^{d}p_{(k)}$ for pure states and by $\sum_{k=1}^{d}\lambda'_{(k)}$ for mixed states.

\paragraph{Overlap-/geometry–sensitive bounds.}
Many sharper bounds incorporate detailed information about the overlaps or geometry of the states. These include near-optimality guarantees for the pretty-good measurement~\cite[Ch.~3]{Watrous2018}, computable dual bounds with attainability conditions~\cite{Nakahira2018}, and recent analytical bounds that recover the Helstrom formula in the binary case~\cite{Loubenets2022,PRATight2024}. Such bounds can be extremely tight instance by instance but require pairwise overlaps or explicit dual operators.

In contrast, our bounds depend only on the priors $(p_i)$, the eigenvalues of $\{p_i\rho_i\}$, and the dimension $d$; they are sharp given that information and match the classical $d$-signal ceiling in the pure case.
\section{Measurement model}\label{sec:measurement}

We now formalize the one-shot quantum measurement in a way that stays close to first principles and makes the role of the dimension $d$ explicit.

\paragraph{Compressing to the joint support.}
Let $\{\rho_i\}$ be Alice’s ensemble of states. Let $\mathcal{H}$ be the span of the supports of the $\rho_i$, and let $d=\dim(\mathcal{H})$. Because any measurement and decision rule can be preceded by a projection onto $\mathcal{H}$ without loss of information, we can work entirely in this $d$-dimensional space.

\paragraph{Measurement and post-processing.}
Any one-shot quantum measurement can be realized by:
\begin{itemize}
  \item attaching an ancilla in some fixed state,
  \item applying a unitary on system plus ancilla, and
  \item measuring in a fixed orthonormal basis and then classically post-processing the outcome.
\end{itemize}
We model this by choosing an arbitrary $q\times d$ matrix $V$ with orthonormal columns,
\[
V^\dagger V = I_d,
\]
and measuring in the standard basis $\{\ket{j}\}_{j=1}^q$ of the $q$-dimensional space. For an input pure state $\ket{\psi}$, this produces outcome $j\in\{1,\dots,q\}$ with probability
\[
\Pr[j|\psi]=\bigl|\braket{j|V|\psi}\bigr|^2.
\]
For a mixed state $\rho$, the outcome probabilities are $\Pr[j|\rho]=\bra{j}V\rho V^\dagger\ket{j}$.

After observing $j$, Bob applies a deterministic decision rule
\[
g:\{1,\dots,q\}\to\{1,\dots,m\}
\]
and outputs $g(j)$ as his guess.

\paragraph{Success probability.}
For a pure-state ensemble $\{(p_i,\ket{\psi_i})\}$, the overall success probability of a given $V$ and $g$ is
\begin{equation}\label{eq:Ps-pure-model}
P_{\mathrm{succ}}^{\mathrm{qu}}
=\sum_{j=1}^q p_{g(j)}\,\bigl|\braket{j|V|\psi_{g(j)}}\bigr|^2
=\sum_{i=1}^m p_i \sum_{j\in g^{-1}(i)} \bigl|\braket{j|V|\psi_i}\bigr|^2.
\end{equation}
For a mixed-state ensemble $\{(p_i,\rho_i)\}$, the success probability is
\begin{equation}\label{eq:Ps-mixed-model}
P_{\mathrm{succ}}^{\mathrm{qu}}
=\sum_{j=1}^q p_{g(j)}\,\bra{j}V\rho_{g(j)}V^\dagger\ket{j}
=\sum_{i=1}^m p_i \sum_{j\in g^{-1}(i)} \bra{j}V\rho_i V^\dagger\ket{j}.
\end{equation}

This model is equivalent in power to describing measurements by POVMs and classical post-processing (via standard dilation and coarse-graining arguments)~\cite{Helstrom1976,Watrous2018}, but we avoid POVM notation in what follows.
\section{Pure-state bound}\label{sec:pure-proof}

We now prove the pure-state qudit vs.\ dit bound using the measurement model of \cref{sec:measurement}. The argument mirrors the classical case, but with a nontrivial constraint coming from the dimension $d$.

Recall from \eqref{eq:Ps-pure-model} that for a pure-state ensemble $\{(p_i,\ket{\psi_i})\}$,
\[
P_{\mathrm{succ}}^{\mathrm{qu}}
=\sum_{i=1}^m p_i \sum_{j\in g^{-1}(i)} \bigl|\braket{j|V|\psi_i}\bigr|^2.
\]
For each message $i$, define
\[
S_i\ \coloneqq\ \sum_{j\in g^{-1}(i)} \bigl|\braket{j|V|\psi_i}\bigr|^2.
\]
Then
\[
P_{\mathrm{succ}}^{\mathrm{qu}}=\sum_{i=1}^m p_i S_i.
\]

Two simple constraints hold for the numbers $S_i$:

\paragraph{Bounded by $1$.}
Fix $i$. Then
\begin{align*}
S_i
&= \sum_{j\in g^{-1}(i)} \bigl|\braket{j|V|\psi_i}\bigr|^2
\le \sum_{j=1}^q \bigl|\braket{j|V|\psi_i}\bigr|^2 \\
&= \|V\ket{\psi_i}\|^2
 = \bra{\psi_i}V^\dagger V\ket{\psi_i}
 = \braket{\psi_i|\psi_i}
 = 1.
\end{align*}
So $0\le S_i\le 1$.

\paragraph{Total budget at most $d$.}
For any fixed outcome $j$ and state index $i$, the Cauchy--Schwarz inequality gives
\[
\bigl|\braket{j|V|\psi_i}\bigr|^2
\le \|\bra{j}V\|^2\,\|\ket{\psi_i}\|^2
= \|\bra{j}V\|^2.
\]
Thus
\[
S_i = \sum_{j\in g^{-1}(i)} \bigl|\braket{j|V|\psi_i}\bigr|^2
\le \sum_{j\in g^{-1}(i)} \|\bra{j}V\|^2.
\]
Summing over $i$,
\[
\sum_{i=1}^m S_i
\le \sum_{i=1}^m \sum_{j\in g^{-1}(i)} \|\bra{j}V\|^2
= \sum_{j=1}^q \|\bra{j}V\|^2
= \Tr(VV^\dagger)
= \Tr(V^\dagger V)
= \Tr(I_d)
= d.
\]

\paragraph{Linear-programming viewpoint.}
For any measurement $V$ and decision rule $g$, we therefore have
\[
P_{\mathrm{succ}}^{\mathrm{qu}} = \sum_{i=1}^m p_i S_i,
\]
for some numbers $(S_i)$ satisfying
\[
0\ \le\ S_i\ \le\ 1 \quad\text{for all }i,\qquad
\sum_{i=1}^m S_i\ \le\ d.
\]

To obtain a universal upper bound, we can maximize $\sum_i p_i S_i$ over all $(S_i)$ satisfying these inequalities. This is a simple linear program:
\[
\max\left\{\sum_{i=1}^m p_i S_i : 0\le S_i\le 1,\ \sum_{i=1}^m S_i\le d\right\}.
\]
Clearly the optimum is achieved by taking $S_i\in\{0,1\}$ and saturating $\sum_i S_i=d$. Thus the best we can do is to set $S_i=1$ for $d$ indices $i$ and $0$ for the rest. Ordering the priors $p_{(1)}\ge\cdots\ge p_{(m)}$, the maximum is
\[
\sum_{i=1}^m p_i S_i \le \sum_{k=1}^{d} p_{(k)}.
\]
Since the same upper bound applies for any measurement and decision rule, this proves
\[
P_{\mathrm{succ}}^{\mathrm{qu}} \le \sum_{k=1}^{d} p_{(k)},
\]
i.e., \cref{thm:pure-main}.

\paragraph{Sharpness via classical construction.}
The sharpness of \cref{thm:pure-main} follows directly from the classical tightness construction discussed earlier. Classically, assigning $d$ distinct signals to the $d$ most likely messages attains success probability $\sum_{k=1}^{d}p_{(k)}$. In the quantum setting, we can realize these $d$ signals as $d$ orthogonal pure states in dimension $d$, and any remaining messages as arbitrary superpositions in their span. Measuring in the corresponding orthonormal basis and decoding as in the classical strategy achieves the same success probability $\sum_{k=1}^{d}p_{(k)}$, so the bound is tight.
\section{Mixed-state spectral bound and sharpness}\label{sec:mixed}

We now prove the mixed-state spectral bound \cref{thm:mixed-main}. Although mixed states are not needed to optimize the messenger--receiver task, the bound is natural and may be useful in other settings where mixed-state ensembles arise, especially given its POVM-equivalent formulation.

\subsection{Setup and factorization}
Let Alice’s messages be encoded into mixed states $\rho_i$ of dimension $d$, with priors $(p_i)$. As in \eqref{eq:Ps-mixed-model},
\[
P_{\mathrm{succ}}^{\mathrm{qu}}
=\sum_{i=1}^m p_i \sum_{j\in g^{-1}(i)} \bra{j}V\rho_i V^\dagger\ket{j}.
\]

For each $i$, choose an orthogonal factorization
\[
\rho_i = \psi_i\psi_i^\dagger,
\]
where $\psi_i$ is a $d\times d$ matrix such that $\psi_i^\dagger\psi_i$ is diagonal with entries
\[
\bra{k}\psi_i^\dagger\psi_i\ket{k}=\lambda_{ik},
\]
the eigenvalues of $\rho_i$. Let $\lambda'_{ik}=p_i\lambda_{ik}$ be the eigenvalues of the weighted states $p_i\rho_i$.

With this factorization,
\[
\bra{j}V\rho_i V^\dagger\ket{j}
=\|\bra{j}V\psi_i\|^2
=\sum_{k=1}^{d} \bigl|\bra{j}V\psi_i\ket{k}\bigr|^2,
\]
so the success probability is
\begin{equation}\label{eq:Ps-mixed-factor}
P_{\mathrm{succ}}^{\mathrm{qu}}
=\sum_{i=1}^m p_i\sum_{j\in g^{-1}(i)}\sum_{k=1}^{d} \bigl|\bra{j}V\psi_i\ket{k}\bigr|^2.
\end{equation}

For each pair $(i,k)$ we now define
\[
S_{ik}\ \coloneqq\ \sum_{j\in g^{-1}(i)} \|\bra{v_{jk}}\|^2,
\]
where $\bra{v_{jk}}$ will denote the projection of the row vector $\bra{j}V$ onto the one-dimensional subspace spanned by $\psi_i\ket{k}$, as described below. These $S_{ik}$ play the same role as the $S_i$ in the pure-state proof.

\subsection{Bounding contributions of each eigen-component}
Fix a message index $i$ and eigen-index $k$. For an outcome $j\in g^{-1}(i)$, consider the projection of the row vector $\bra{j}V$ onto the one-dimensional subspace spanned by $\psi_i\ket{k}$. Let $\bra{v_{jk}}$ denote this projection, so
\[
\bra{j}V\psi_i\ket{k} = \bra{v_{jk}}\psi_i\ket{k}.
\]
Then
\[
\bigl|\bra{j}V\psi_i\ket{k}\bigr|^2
= \|\bra{v_{jk}}\|^2\,\|\psi_i\ket{k}\|^2
= \lambda_{ik}\,\|\bra{v_{jk}}\|^2.
\]
Summing over $j\in g^{-1}(i)$,
\begin{equation}\label{eq:component-mixed}
\sum_{j\in g^{-1}(i)} \bigl|\bra{j}V\psi_i\ket{k}\bigr|^2
= \lambda_{ik} \sum_{j\in g^{-1}(i)} \|\bra{v_{jk}}\|^2
= \lambda_{ik} S_{ik}.
\end{equation}

As before, we need two simple bounds on $S_{ik}$.

First, summing \eqref{eq:component-mixed} over all outcomes $j$,
\begin{align*}
\sum_{j=1}^q \bigl|\bra{j}V\psi_i\ket{k}\bigr|^2
&= \|V\psi_i\ket{k}\|^2
= \bra{k}\psi_i^\dagger V^\dagger V\psi_i\ket{k}
= \bra{k}\psi_i^\dagger\psi_i\ket{k}
= \lambda_{ik},
\end{align*}
so
\[
\sum_{j\in g^{-1}(i)} \bigl|\bra{j}V\psi_i\ket{k}\bigr|^2 \le \lambda_{ik}
\]
and hence, using \eqref{eq:component-mixed},
\[
S_{ik} \le 1.
\]

Second, for each fixed outcome $j$, the vectors $\{\psi_i\ket{k}\}_{k=1}^{d}$ (for fixed $i$) are orthogonal, so their spans are orthogonal subspaces. The projections $\{\bra{v_{jk}}\}_k$ of a fixed row $\bra{j}V$ onto these subspaces satisfy
\[
\sum_{k=1}^{d}\|\bra{v_{jk}}\|^2 \le \|\bra{j}V\|^2.
\]
Summing over $j$,
\begin{equation}\label{eq:vjk-budget}
\sum_{j=1}^q\sum_{k=1}^{d}\|\bra{v_{jk}}\|^2
\le \sum_{j=1}^q\|\bra{j}V\|^2
= \Tr(VV^\dagger)
= \Tr(V^\dagger V)
= \Tr(I_d) = d.
\end{equation}
Using the definition of $S_{ik}$ and summing over $i$ and $k$,
\[
\sum_{i=1}^m\sum_{k=1}^{d} S_{ik}
= \sum_{i=1}^m\sum_{k=1}^{d}\sum_{j\in g^{-1}(i)} \|\bra{v_{jk}}\|^2
= \sum_{k=1}^{d}\sum_{j=1}^q \|\bra{v_{jk}}\|^2
\le d.
\]
We also clearly have $S_{ik}\ge 0$.

\subsection{Linear-programming formulation}
Using \eqref{eq:component-mixed} in \eqref{eq:Ps-mixed-factor}, we find
\begin{align*}
P_{\mathrm{succ}}^{\mathrm{qu}}
&= \sum_{i=1}^m p_i\sum_{k=1}^{d}\sum_{j\in g^{-1}(i)} \bigl|\bra{j}V\psi_i\ket{k}\bigr|^2 \\
&= \sum_{i=1}^m p_i\sum_{k=1}^{d} \lambda_{ik} S_{ik} \\
&= \sum_{i=1}^m\sum_{k=1}^{d} \lambda'_{ik}\,S_{ik}.
\end{align*}
For any measurement $V$ and decision rule $g$, the numbers $S_{ik}$ satisfy
\[
0\ \le\ S_{ik}\ \le\ 1,\qquad \sum_{i,k} S_{ik}\ \le\ d.
\]

To obtain a universal upper bound, we again maximize the linear form
\[
\sum_{i=1}^m\sum_{k=1}^{d} \lambda'_{ik}\,S_{ik}
\]
over all such $(S_{ik})$:
\[
\max\left\{ \sum_{i=1}^m\sum_{k=1}^{d} \lambda'_{ik}\,S_{ik} : 0\le S_{ik}\le 1,\ \sum_{i,k} S_{ik}\le d \right\}.
\]
The optimum is attained by taking $S_{ik}\in\{0,1\}$ and saturating $\sum_{i,k} S_{ik}=d$. In other words, we should set $S_{ik}=1$ for those $d$ pairs $(i,k)$ corresponding to the largest values of $\lambda'_{ik}$ and $S_{ik}=0$ elsewhere. If we sort the full multiset $\{\lambda'_{ik}\}_{i,k}$ as $\lambda'_{(1)}\ge\lambda'_{(2)}\ge\cdots$, this yields
\[
P_{\mathrm{succ}}^{\mathrm{qu}}
\le \sum_{k=1}^{d} \lambda'_{(k)},
\]
proving \cref{thm:mixed-main}.

\subsection{Sharpness and construction}
As in the pure-state case, the bound is sharp with respect to the weighted eigenvalues.

\begin{proposition}[Sharpness for mixed encodings]
Fix a dimension $d$ and the multiset of weighted eigenvalues $\{\lambda'_{ik}=p_i\lambda_{ik}\}_{i,k}$. Then there exists an ensemble of mixed states $\{\rho_i\}$ with these weighted eigenvalues and a measurement/decision rule such that
\[
P_{\mathrm{succ}}^{\mathrm{qu}} = \sum_{k=1}^{d}\lambda'_{(k)}.
\]
\end{proposition}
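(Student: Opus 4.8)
The plan is to build an explicit ensemble and measurement that saturates the linear program of \cref{sec:mixed}: make every $\rho_i$ diagonal in one common orthonormal basis, route the $d$ largest weighted eigenvalues onto $d$ distinct basis vectors, and let Bob measure in that basis. Concretely, work in $\mathbb{C}^d$ with standard basis $\{\ket{1},\dots,\ket{d}\}$, and let $T$ be a set of $d$ index pairs $(i,k)$ realizing the $d$ largest values $\lambda'_{(1)},\dots,\lambda'_{(d)}$ of the multiset (breaking ties arbitrarily). First I would fix an arbitrary bijection $\beta\colon T\to\{1,\dots,d\}$. Then, for each message $i$, I would extend the partial map $k\mapsto\beta(i,k)$ (defined on those $k$ with $(i,k)\in T$) to a full bijection $\pi_i\colon\{1,\dots,d\}\to\{1,\dots,d\}$; this is possible because $\beta$ is injective, so the positions already used by message $i$ are distinct and the unused ones can be matched bijectively to the remaining eigen-indices. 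Define $\rho_i=\sum_{k=1}^{d}\lambda_{ik}\,\ket{\pi_i(k)}\bra{\pi_i(k)}$, which is a valid density operator with exactly the prescribed eigenvalues $\{\lambda_{ik}\}_k$; here the normalization $\sum_k\lambda_{ik}=1$ is implicit in the data (the $\lambda'_{ik}$ being weighted eigenvalues of a genuine ensemble).

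For the decoder, take $V=I_d$ (no ancilla, $q=d$), so Bob simply measures $\rho_i$ in the standard basis, and let $g\colon\{1,\dots,d\}\to\{1,\dots,m\}$ send outcome $j$ to the unique message $i$ with $\beta(i,k)=j$ for some $k$; this is well defined since $\beta$ is a bijection onto $\{1,\dots,d\}$. Then by \eqref{eq:Ps-mixed-model}, $P_{\mathrm{succ}}^{\mathrm{qu}}=\sum_{j=1}^{d} p_{g(j)}\bra{j}\rho_{g(j)}\ket{j}$, and for each $j$ the associated pair $(i,k)\in T$ with $\pi_i(k)=j$ contributes $p_i(\rho_i)_{jj}=p_i\lambda_{ik}=\lambda'_{ik}$. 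As $j$ ranges over $\{1,\dots,d\}$ the pairs range over all of $T$, so $P_{\mathrm{succ}}^{\mathrm{qu}}=\sum_{(i,k)\in T}\lambda'_{ik}=\sum_{k=1}^{d}\lambda'_{(k)}$, matching the upper bound of \cref{thm:mixed-main}.

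The only step needing care is the combinatorial one: simultaneously diagonalizing all the $\rho_i$ in the same basis while ensuring the $d$ top-weighted eigen-components sit on $d$ distinct coordinates. I expect this to be the main (mild) obstacle, but it dissolves once one notices that fixing the global bijection $\beta$ on $T$ first automatically avoids any same-message collision among top components (a bijection has no collisions), after which each per-message extension $\pi_i$ is an unobstructed matching of leftover eigen-indices to leftover coordinates. I would also remark that this establishes sharpness in a strong sense: equality is attained by a finite-dimensional projective measurement with no ancilla and by simultaneously-diagonal (hence commuting) encodings, so no bound phrased in terms of $d$ and the multiset $\{\lambda'_{ik}\}$ alone can improve on \cref{thm:mixed-main}.
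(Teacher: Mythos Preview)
Your proposal is correct and follows essentially the same route as the paper's proof: place the $d$ largest weighted eigen-components on $d$ distinct orthonormal vectors, embed the remaining spectrum of each $\rho_i$ in that same basis, and have Bob measure projectively there and decode via the assignment. Your version is in fact more careful than the paper's sketch, since you make the bijections $\beta$ and $\pi_i$ explicit and verify the one mild combinatorial point (that each $\rho_i$ can be routed onto distinct coordinates while the top $d$ weighted components occupy their reserved slots).
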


\begin{proof}
Let $\lambda'_{(1)}\ge\cdots\ge\lambda'_{(d)}$ be the $d$ largest weighted eigenvalues, and let $\{\ket{e_1},\dots,\ket{e_d}\}$ be an orthonormal set in the $d$-dimensional space. Construct mixed states $\rho_i$ so that these $d$ eigenvalues correspond to orthogonal eigenvectors among the $\ket{e_k}$, and assign each $\lambda'_{(k)}$ to its appropriate message index $i$ according to the original labeling. Embed all remaining spectral components in the span of $\{\ket{e_k}\}$ so that the overall dimension remains $d$.

Bob measures in a basis extending $\{\ket{e_1},\dots,\ket{e_d}\}$ and, upon outcome $\ket{e_k}$, guesses the message index associated with $\lambda'_{(k)}$. If the true message is that index, the probability of outcome $\ket{e_k}$ is exactly $\lambda'_{(k)}$; other messages contribute nothing guaranteed on those outcomes. Summing over $k$ gives
\[
P_{\mathrm{succ}}^{\mathrm{qu}} = \sum_{k=1}^{d} \lambda'_{(k)}.
\]
\end{proof}

\begin{remark}
Since $\sum_{k=1}^{d}\lambda'_{(k)}$ is achieved by suitable ensembles, no bound depending only on the multiset of weighted eigenvalues and the dimension $d$ can be smaller. Because our measurement model is equivalent to arbitrary POVMs, \cref{thm:mixed-main} can be viewed as a general, spectrum-based upper bound on the optimal success probability of minimum-error discrimination for any finite mixed-state ensemble under arbitrary POVMs.
\end{remark}
\section{Conclusion}\label{sec:conclusion}
We studied a basic question: in a one-shot messenger--receiver problem without preshared entanglement, does sending a single qudit offer any advantage over sending a classical dit for identifying a random message drawn from a known prior distribution?

Using a simple first-principles measurement model, we showed that for pure-state encodings on a $d$-dimensional system,
\[
P_{\mathrm{succ}}^{\mathrm{qu}}\ \le\ \sum_{k=1}^{d}p_{(k)},
\]
matching the classical $d$-signal ceiling and demonstrating that qudits offer no advantage over dits in this setting. We then derived a mixed-state generalization,
\[
P_{\mathrm{succ}}^{\mathrm{qu}}\ \le\ \sum_{k=1}^{d}\lambda'_{(k)},
\]
where $\lambda'_{(k)}$ are the largest weighted eigenvalues of the states $\{p_i\rho_i\}$. Both bounds strictly improve on the standard dimension-only ceiling $P_{\mathrm{succ}}^{\mathrm{qu}}\le d\,p_{\max}$ and are sharp given only the priors and dimension, or priors plus spectra and dimension, respectively.

For the messenger--receiver problem itself, convexity implies that optimal encodings can be taken pure, so the qudit--vs.--dit statement is already captured by the pure-state result. The mixed-state spectral bound nonetheless provides a clean and general upper bound on the performance of arbitrary POVMs for minimum-error discrimination of mixed ensembles.

Two natural directions for further work stand out. First, one can ask what happens when \emph{preshared entanglement} between Alice and Bob is allowed. Superdense coding already shows that entanglement can increase the amount of classical information conveyed per transmitted qubit; it would be interesting to formulate and prove analogues of our sharp one-shot bounds in that entanglement-assisted setting for arbitrary prior distributions. Second, the present bounds involve the discrete dimension $d$ of the joint support. It would be useful to develop more ``continuous'' upper bounds that interpolate smoothly with the spectral profiles of the ensemble, for example by relating $P_{\mathrm{succ}}^{\mathrm{qu}}$ to suitable notions of effective rank or eigenvalue concentration rather than only the hard cutoff at $d$.

% ========================== Bibliography ========================== %

\nocite{*} % Remove this if you only want cited entries to appear
\printbibliography

\end{document}